\newtheorem{theorem}{Theorem}
\newtheorem{proposition}{Proposition}
\title{\LARGE \bf Dynamic multi-agent assignment via discrete optimal transport}
\author{Koray Gabriel Kachar* and Alex Gorodetsky**
\thanks{*Koray G. Kachar is with the Department of Aerospace Engineering, University of Michigan, Ann Arbor, MI 48109, USA {\tt\small kkachar@umich.edu}}%
\thanks{**Alex Gorodetsky is an Assitant Professor, Aerospace Engineering, University of Michigan, Ann Arbor, MI 48109, USA {\tt\small goroda@umich.edu}}
}
\newcommand{\reals}{\mathbb{R}}
\newcommand{\posint}{\mathbb{Z}_{+}}
\renewcommand{\vec}[1]{#1}
\newcommand{\mat}[1]{#1}
\newcommand{\ta}{\tilde{\mat{A}}}
\DeclareMathOperator*{\argmin}{arg\,min}
\begin{document}

\maketitle
\thispagestyle{empty}
\pagestyle{empty}

\begin{abstract}

We propose an optimal solution to a deterministic dynamic assignment problem by leveraging connections to the theory of discrete optimal transport to convert the combinatorial assignment problem into a tractable linear program. We seek to allow a multi-vehicle swarm to accomplish a dynamically changing task, for example tracking a multi-target swarm. Our approach simultaneously determines the optimal assignment and the control of the individual agents. As a result, the assignment policy accounts for the dynamics and capabilities of a heterogeneous set of agents and targets. In contrast to a majority of existing assignment schemes, this approach improves upon distance-based metrics for assignments by considering cost metrics that account for the underlying dynamics manifold. We provide a theoretical justification for the reformulation of this problem, and show that the minimizer of the dynamic assignment problem is equivalent to the minimizer of the associated Monge problem arising in optimal transport. We prove that by accounting for dynamics, we only require computing an assignment once over the operating lifetime --- significantly decreasing computational expense. Furthermore, we show that the cost benefits achieved by our approach increase as the swarm size increases, achieving almost 50\% cost reduction compared with distance-based metrics. We demonstrate our approach through simulation on several linear and linearized problems. 

\end{abstract}

\section{INTRODUCTION}

Our aim is to enable efficient centralized decision making amongst swarms of agents that are tasked to intercept or track a swarm of target vehicles. Specifically, we seek an optimal centralized assignment policy that is \textit{capability-aware} --- it can \textit{leverage} known dynamics of the agents and targets to make optimal assignments that respect the capabilities of the agents and targets. We approach this problem by posing an objective function that accounts for both the high level cost of all assignments and the low-level costs of the optimal control policies used by each agent measures. We add differential constraints arising from vehicle dynamics to complete the optimization formulation. This approach stands in contrast to the majority of techniques that use \textit{distance-based} (or bottle-neck assignment~\cite{Burkard2002}) cost functions~\cite{Kloder2006,Faigl2012,Panagou2014}.


The approach we take in this work is based on the realization of the close relationship between the given problem and the theory of optimal couplings, or optimal transport~\cite{Villani2003,Villani2008}. In the context of probability theory, to which it is often applied, optimal transport studies the problem of determining joint distributions between sets of random variables whose marginal distributions are constrained. In other words it tries to find a coupling that maps a reference measure to a target measure. Optimal transport has been applied a wide variety of other areas as well; for instance it has been used to great effect in the areas of machine learning~\cite{Canas2012,Cuturi2013,Frogner2015}, image manipulation~\cite{Ferradans2014}, and Bayesian inference~\cite{Moselhy2012}.

\subsection{Innovation and Contributions}
The fundamental insight we use to relate OT to the present context is that the set agents may be viewed as a discrete measure that we seek to map to the discrete measure denoted by the set of targets. In this way, we consider discrete optimal transport (DOT). Our context is also different from the standard DOT problem in that the target measure is changing and that the transport of the reference to the target must respect the differential constraints given by the dynamics. Our innovation is that we can address these issues by introducing a new metric that respects the dynamics, as explored by Ghoussoub et. al. \cite{ghoussoub2018optimal}, rather than the traditional unweighted Euclidean metric that underpins the Wasserstein or ``Earth Movers'' distance. Our proposed metric uses the optimal control cost of a single-agent vs. single-target system as the cost of the proposed assignment. For instance if the agents will perform LQR reference tracking to intercept their targets then the LQR cost is used as the transportation cost. Alternatively, if the agents will solve a  pursuit evasion game, then the transportation cost will be obtained from the solution to the differential game. In this way, the assignment becomes aware of the capabilities of the system, including the differential constraints and the decision making approach of individual agents.

Our problem is specified by two inputs
\begin{enumerate}
\item The dynamics of the agents and their targets
\item A mechanism to evaluate a feedback policy and its cost for any single agent
\end{enumerate}
Using these two specifications we form a cost function that is the sum of all individual agent cost functions, and seek an assignment that minimizes this total cost. Critically, we see that the cost used for each agent is that of the \textit{feedback policy} --- not the distance. Typically, such feedback policies are obtained to either optimally regulate or operate the underlying agent. Thus, the cost incurred by an agent that is following its feedback policy is a more appropriate measure of optimality than one based on the distance an agent must travel.

Our approach provides a solution for this problem, and consists of the following contributions
\begin{enumerate}
    \item A new capability-aware assignment approach that simultaneously optimizes the assignment and the underlying feedback controls of the agents
    \item A reformulation of the vehicle-target dynamic assignment problem as a linear program by leveraging concepts from discrete optimal transport
\end{enumerate}
The above two contributions are supported via both theoretical and simulation results. In particular, we prove that our cost function can be reformulated into the Monge problem from optimal transport. This problem can then be solved via a linear programming approach. The capability-aware assignment problem is demonstrated to have lower final cost as compared to a distance-based assignment that neglects the feedback control policy. We empirically show that the optimality gap between our approach and distance-based metrics grows with the number of agents. Finally, we prove that after formulating the assignment problem in the DOT framework, it needs only be solved once rather than repeatedly over the life of the system. As a result, we see significant computational benefits compared to repeatedly re-calculating a distance-based assignment.

\subsection{Related work}


Assignment and resource allocation problems present themselves across many disciplines. In the area of multirobot task assignment, self-organizing map  neural-networks were designed to learn how to assign teams of robots to target locations under a dual decision-making and path-planning framework \cite{zhu2006neural}. However, the algorithm proposed in that work is largely heuristic and does not consider the underlying capabilities of the assigned robots.  Other papers have considered more general kinematic movements of the formations in general, rather than individual agent capabilities, and were able to provide suboptimality guarantees for the overall assignment~\cite{ji2006role}. Another approach can be found on~\cite{kwok2002analyzing} that proposes an approach that is very similar to ours in that it solves a related linear programming problem. However, that approach did not consider the effect of general dynamics of the system or a changing set of targets. 

A similar assignment problem arises in from vehicle-passenger transit scheduling that have become extremely important in ride sharing applications \cite{ceder2011optimal},  \cite{de2016solving}. Alonso-Mora et.al \cite{alonso2017demand} investigated dynamic trip-vehicle assignment that is both scalable and optimal using a greedy guess that refines itself over time. In general, these problems lack consideration of underlying dynamics of the resource  being assigned or that task being assigned to. Assignment problems also arise in wide areas of econometrics dealing with matching population models to maximize total utility surplus, contract theory, or risk management \cite{galichon2018optimal},  \cite{graham2011econometric},  \cite{jonker1987shortest}. In general, these problems also do not consider the underlying dynamic nature within the assignment problem.
 
One closely related application area that, at times, also considers the dynamics in completing an assignment is the so-called weapon-target-assignment (WTA) ~\cite{Hosein1990,Cai2006}.
The WTA problem itself comes in two-forms: the static WTA and the dynamic WTA. In the static WTA problem, all of the assignments are made at once with no feedback possible, whereas the dynamic WTA allows for feedback between decision making phases~\cite{xin2010efficient}. Our approach is related to this problem as it is a certain mixture of these two; first, it considers explicit dynamic capabilities of the agents and targets during the assignment problem; and second it potentially allows for reassignment of the agents during operations. Our setup can also be viewed as a limiting case of the traditional WTA in that we assume that once the weapon intercepts the target it successfully destroys it with 100\% probability. This contrasts to the traditional WTA setting where a weapon might only have a certain probability of destroying its target.
 
The traditional WTA assignment problem (with probabilistic survival of targets after interception) has typically been formulated as a nonlinear integer programming problem for which exact methods have not yet been found. As a result, a large number of heuristic and/or approximate approaches have been developed. For instance approaches based on approximate dynamic programming~\cite{davis2017approximate}, fuzzy reasoning~\cite{csahin2014approximating},  various search heuristics~\cite{xin2010efficient}, genetic and/or global optimization approaches~\cite{lee2002immunity, lee2003efficiently}, network-based heuristics~\cite{Ahuja2007}, amongst others have all been studied. 
In comparison to these previous works on WTA we provide several contributions. Our proposed (as far as we are aware previously unrecognized) link to optimal transport theory can yields additional theoretical and computational guarantees.

Finally, we review some connections between our proposed approach and existing solutions in robotics and control. Fredrick et. al. \cite{frederick2019multi} investigated multi-robot path planning and shape formation underpinned by optimal transport to prove that the desired formations can be obtained while maintaining collision-free motions with assurance of convergence to global minima. Similarly, Bandyopadhyay et. al. \cite{bandyopadhyay2014probabilistic, bandyopadhyay2016novel, bandyopadhyay2017probabilistic} describe an approach where swarm members are placed into bins which have constraints that must satisfied in order to permit the transition of the agents to neighboring bins. These motion constraints are representative of the dynamics or physical limitations present in the system. In terms of the approach described in this paper, the optimal transport cost metric is thus a modified $L_2$ distance between the centroids of the bins subject to a motion constraint matrix; if motion is possible, the cost is the $L_2$ distance, otherwise the cost is the maximum value.

Here we consider a specific setting, one with deterministic and known dynamics, for which we can prove optimality. While we do not consider limitations on communication between agents, this problem has also been consider in the decentralized decision making context where each vehicle is making its own decisions. In this case all of the agents must come to a consensus through different communication/negotiation strategies, see e.g.~\cite{Yamauchi1999} for a greedy assignment strategy and~\cite{arslan2007autonomous} for an example of a game-theoretical formulation.

\section{Problem Definition}\label{sec:problem}

In this section we define the dynamic agent-target assignment problem. We begin by describing a dynamical system that describes the evolution of active states, targets, and destinations. We then provide an optimization problem that we seek to solve.

\subsection{Dynamical System}\label{sec:dyn}

We limit our presentation to the case of linear, control-affine systems for clarity of exposition. Our approach and theory is also valid for nonlinear systems given an ability to compute policy costs associated with nonlinear controllers.

Let $n,m \in \posint$ denote a positive number of autonomous agents (resources) and targets.  If we consider agent $i$ and target $j$, then their states at time $t$ are denoted by $x_{i}(t) \in \mathcal{X} \subset \reals^{d}$, $y_{j}(t) \in \mathcal{Y} \subset \reals^{d_y}$, respectively.  Agent $i$ takes actions $u_i \in \reals^{d_{u}}.$ 

In our problem, the number of agents and targets can only decrease with time. We leave consideration of newly appearing targets to future work. Each object that has not been removed is termed \textit{active} so that at $t=0$ we have $n$ active agents and $m$ active targets.

An agent/target pair can become inactive when the agent successfully intercepts or completes its resource allocation. Let $\pi_x$ and $\pi_y$ define functions that extract the \textit{positions} of the agents and  targets from their states. Successful resource allocation of an agent is defined when $\lVert \pi_x(x_i) - \pi_y(y_j) \rVert \leq \epsilon.$ In other words, when the position of agent $i$ is within an $\epsilon-$ball of target $j$, then both become inactive.

The activity of the agents and targets at each time is represented by the active sets $\mathcal{A}(t)$ and $\mathcal{T}(t).$ For instance, if all agents are active then $\mathcal{A}(t) = [1, \ldots, n]$, whereas if, for instance agent $i$ has successfully reached target $j$ then $i$ is removed from $\mathcal{A}$ and $j$ is removed from $\mathcal{T}$. This process defines an evolution of the active sets. At a given time $t$, the active agents and targets evolve according to a stochastic differential equation
\begin{align*}
    \dot{x}_{i}(t) &= \mat{A}_i x_i(t) + \mat{B}_i u_i(t) 
    &\textrm{for } i \in \mathcal{A}(t) \\ 
    \dot{y}_{j}(t) &= \ta_{i} y_j(t)  
    &\textrm{for } j \in \mathcal{T}(t).
\end{align*}

where $\mat{A}_i \in \reals^{d \times d}$ and $\mat{B}_i \in \reals^{d \times d_u}$ correspond to the drift of linear dynamics of the agent; and $\ta_j \in \reals^{d_y \times d_{yu}}$ corresponds to the closed loop linear dynamics of the targets. Note that here we have assumed linear dynamics; however this assumption is entirely unnecessary for our theory in Section~\ref{sec:analysis}. It is however, more computationally tractable because it leads to a solution of sets of linear optimal control problems, for instance LQR or LQI. A more significant assumption that these dynamics imply is that there is no interaction between agents, i.e., we do not consider collisions or other interference effects. We leave this matter for future work, but note that in the simulation examples in Section~\ref{sec:experiments} we noticed that collisions between agents did not generally occur.

Finally, the entire state of the system be defined by the tuple $S(t) = \left(\mathcal{A}(t), \mathcal{T}(t), \left(x_i(t)\right)_{i_\in \mathcal{A}(t)}, \left(y_j(t)\right)_{j \in \mathcal{T}(t)}\right).$ Let $\mathcal{O}$ denote a set of states $S(t)$ for which there is at least one active target or destination. Define the \textit{exit time} $\tau$ as the first time that the state $S(t)$ exists from $\mathcal{O}$.

\subsection{Policies, cost functions, and optimization}


We seek a feedback policy $\mu$ that maps $S(t)$ to a set of controls $u_i(t)$ for all active agents $i \in \mathcal{A}(t)$. 
The policy is represented by a tuple 
\begin{equation}\label{eq:pol}
    \mu(S(t)) = \left(\sigma_t, ( \mu_i : i \in \mathcal{A}(t)) \right)
\end{equation}
where $\sigma_t: \mathcal{A}(t) \to \mathcal{T}(t)$ is an index function that assigns active agents to active targets and $\mu_i$ is a feedback control policy for the individual agents. The goal then, is to determine an optimal feedback policy of this form.

An \textit{optimal} feedback policy is one that minimizes
\begin{equation} \label{eq:opt}
    \displaystyle{\mu^* = \argmin_{\mu} \left[ 
     \int_{0}^\tau g(t, S(t), \mu(S(t)))\right] 
    },
\end{equation}
where the stage cost mapping the state of the system to the reals is $g$ and the time $\tau$ is the first exit time. The optimal value function will then be denoted by 
\begin{equation}\label{eq:opt_val}
J(S(0)) = \int_{0}^{\tau} g(t, S(t), \mu^{*}(S(t)) dt
\end{equation}



The stage cost intends to guide each agent $i$ to its assigned target $j$ and is therefore represented by the sum
\begin{equation}\label{eq:stage-sep}
    g(t, S(t), \mu(S(t))) = \sum^{n}_{i=1, i \in \mathcal{A}(t)} g_i(x_i, u_i, y_{\sigma_t(i)}),
\end{equation}
where the cost $g_i$ assigned to agent $i$ is a function of the corresponding agent state, the agent control, and the target to which the agent is assigned. For instance, this cost could be a quadratic corresponding to an infinite horizon tracking problem \cite{willems2004rigorous}
\begin{equation}\label{eq:stage}
\begin{split}
   & g_i(x_i, u_i, y_{\sigma_t(i)}) = \\
    & \left(\pi_x(x_i) - \pi_y(y_{\sigma_t(i)})\right)^T Q  \left(\pi_x(x_i) - \pi_y(y_{\sigma_t(i)})\right) + 
    u_i^T R u_i\\
    & -\left(\pi_{x_{ss}}(x_i) - \pi_{y_{ss}}(y_{\sigma_t(i)})\right)^T Q  \left(\pi_{x_{ss}}(x_i) - \pi_{y_{ss}}(y_{\sigma_t(i)})\right)\\
    & + 
    u_{{ss}_i}^T R u_{{ss}_i}
\end{split} 
\end{equation}
where $Q$ and $R$ penalize the distance between the agent-target system and the control and $\pi_{x_{ss}}$ and $\pi_{y_{ss}}$ are the steady-state values for the agent and assigned-to target, respectively. These transient and steady-state terms represent the dual goals of this particular optimal controller, which are to drive the error of the agent-target to the optimal steady-state and then keep the agent-target system at this optimal state.

\section{Discrete optimal transport}\label{sec:dot}

In this section we provide background to discrete optimal transport, and indicate how it relates to our dynamic assignment problem. We follow the description given by~\cite{Peyre2019}.

Let $\Sigma_{n}$ denote a probability simplex so that any  $\vec{a} \in \Sigma_{n}$ belongs to the simplex
\begin{equation}
    \Sigma_{n} \equiv \left \{ a \in \reals_{+}^n : \sum_{i=1}^n a_i=1  \right\}
\end{equation}
A discrete measure $\mu_X$ is defined by the fixed locations $x_1,\ldots,x_n \in \mathcal{X}$ and weights $\vec{a} \in \Sigma_{n}$, and denoted by
\begin{equation}
    \mu_X = \sum_{i=1}^n a_i \delta_{x_i}.
\end{equation}

The transport map $T$ between two discrete measures, $\mu_X = \sum_{i=1}^n a_i \delta_{x_i}$ and $\mu_Y = \sum_{j=1}^m b_j \delta_{y_j},$ is a surjective function $T:\{x_1,\ldots,x_{n}\} \to \{y_1,\ldots,y_{m}\}$ that satisfies
\begin{equation}
    b_j = \sum_{i,T(x_i) = y_j} a_i  \quad \forall j = 1,\ldots, m.
\end{equation}
Compactly the above is written as $T_{\sharp}\mu_x = \mu_y$, which implies that measure $\mu_y$ is the \textit{push-forward} measure of $\mu_x$ under the map $T$.

\subsection{Monge problem}

We seek to find \textit{optimal} assignments for the agent-target system, and this implies that we seek a map $T$ that minimizes some the transportation cost. Let the transportation cost be defined pointwise as
\begin{equation}\label{eq:tcost}
c(x,y) = c(x, T(x)), \quad \forall (x, y) \in \mathcal{X} \times \mathcal{Y}.
\end{equation}
The Monge problem then seeks a map $T$ that minimizes 
\begin{equation}\label{eq:monge}
\min_{T}\left[ \sum_{i=1}^n c(x_i, T(x_i)) : T_{\sharp}\mu_X = \mu_Y \right].
\end{equation}
To parameterize T, we can define an index function $\sigma:\llbracket n \rrbracket \to \llbracket m \rrbracket$ so that $T(x_i) = y_{\sigma(i)}$, just as in Equation~\eqref{eq:pol}.

The problem with optimizing Equation~\eqref{eq:monge} is that it is not-convex. In general, convexity can be achieved by relaxing the deterministic nature of the map to allow portions of $x_i$ to be directed towards $y_j$. The resulting stochastic map is defined by a coupling matrix (transition probability matrix, or stochastic matrix) $\mat{P} \in \reals_{+}^{n \times m}$ with $p_{ij}$ indicating the portion of $x_i$ assigned to $y_j$. Define a set of allowable coupling matrices
\[
\mathcal{U}(\vec{a}, \vec{b}) \equiv \left\{ \mat{P} \in \reals_{+}^{n \times m}: \mat{P} \vec{1}_m = \vec{a} \textrm{ and } \mat{P}^T \vec{1}_n = b \right\},
\]
where $\vec{1}_{k}$ denotes a vector of size $k$ filled with ones.
The Monge-Kantorovich optimization formulation then becomes
\begin{equation}\label{eq:kantorovich}
    L_c(\vec{a},\vec{b}) = \min_{\mat{P} \in \mathcal{U}(\vec{a},\vec{b})} \sum_{i,j} c(x_i, y_j)p_{ij},
\end{equation}
and it can be solved with linear programming. Under the conditions given next, the solution to this optimization problem is equal to the solution of the Monge problem.

\subsection{Matching problem}
The matching problem is a particular realization of OT that has the property that the minimizer of~\eqref{eq:kantorovich} is equal to that of~\eqref{eq:monge}. The formal statement of this equivalence is given below.

\begin{proposition}[Kantorovich for matching (Prop 2.1,~\cite{Peyre2019})]
If $n=m$ and $\vec{a} = \vec{b} = \vec{1}_n/n$ then there exists an optimal solution for minimizing Equation~\eqref{eq:kantorovich} $\mat{P}_{\sigma^*}$, which is a permutation matrix associated to an optimal permutation $\sigma^{*} \in \textrm{Perm(n)}$ for Equation~\eqref{eq:monge}.
\end{proposition}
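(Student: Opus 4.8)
The plan is to recognize the feasible set of the Kantorovich program~\eqref{eq:kantorovich} in this symmetric, uniform case as a rescaled Birkhoff polytope, and then to exploit the fact that a linear objective over a compact polytope is minimized at a vertex. First I would record the elementary structural facts. Under the hypotheses $n=m$ and $\vec{a}=\vec{b}=\vec{1}_n/n$, the map $\mat{P}\mapsto n\mat{P}$ carries $\mathcal{U}(\vec{1}_n/n,\vec{1}_n/n)$ bijectively onto the set $\mathcal{B}_n$ of $n\times n$ doubly stochastic matrices (nonnegative entries, all row sums and column sums equal to one). In particular $\mathcal{U}(\vec{1}_n/n,\vec{1}_n/n)=\tfrac1n\mathcal{B}_n$ is nonempty (it contains $\tfrac1n I$), closed, and bounded, hence a compact convex polytope, so the minimum in~\eqref{eq:kantorovich} is attained; and since $\mat{P}\mapsto\sum_{i,j}c(x_i,y_j)p_{ij}$ is linear, that minimum is attained at a vertex of the polytope.

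Second, I would invoke the Birkhoff--von Neumann theorem: the vertices of $\mathcal{B}_n$ are exactly the permutation matrices $\{\mat{P}_\sigma : \sigma\in\mathrm{Perm}(n)\}$, so the vertices of $\tfrac1n\mathcal{B}_n$ are the $\tfrac1n\mat{P}_\sigma$. If a self-contained argument is wanted, this is the standard support/cycle-cancelling fact: a doubly stochastic matrix that is not a permutation matrix has an alternating cycle in its bipartite support graph along which mass can be shifted in two opposite directions while preserving all row/column sums and nonnegativity, so it is not extreme; conversely each $\mat{P}_\sigma$ is clearly extreme. Combining with the first step, there is an optimal $\mat{P}^\star=\tfrac1n\mat{P}_{\sigma^\star}$ for~\eqref{eq:kantorovich} with $\sigma^\star\in\mathrm{Perm}(n)$.

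Finally I would close the loop with the Monge problem~\eqref{eq:monge}. Because $\mu_X$ and $\mu_Y$ are uniform on $n$-point sets, the admissibility constraint $b_j=\sum_{i:T(x_i)=y_j}a_i=\tfrac1n$ forces every $y_j$ to have exactly one preimage, so any admissible Monge map is a bijection $T(x_i)=y_{\sigma(i)}$ for some $\sigma\in\mathrm{Perm}(n)$; conversely every such $\sigma$ gives an admissible map and a feasible coupling $\tfrac1n\mat{P}_\sigma\in\mathcal{U}(\vec{1}_n/n,\vec{1}_n/n)$ of cost $\tfrac1n\sum_i c(x_i,y_{\sigma(i)})$. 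Hence $L_c(\vec{a},\vec{b})\le\tfrac1n\min_{\sigma\in\mathrm{Perm}(n)}\sum_i c(x_i,y_{\sigma(i)})$, while the optimal vertex above gives $L_c(\vec{a},\vec{b})=\tfrac1n\sum_i c(x_i,y_{\sigma^\star(i)})\ge\tfrac1n\min_{\sigma}\sum_i c(x_i,y_{\sigma(i)})$. Therefore equality holds throughout and $\sigma^\star$ is a minimizer of~\eqref{eq:monge}, which is the claim.

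I expect the only real content — and hence the main obstacle — to be the extreme-point characterization of the Birkhoff polytope (Birkhoff--von Neumann) together with the observation that an LP is solved at a vertex of its feasible polytope; everything else is bookkeeping, including the minor normalization mismatch between the $\tfrac1n$-weighted coupling and the unnormalized Monge objective in~\eqref{eq:monge}, which rescales the optimal value but does not change the minimizing permutation.
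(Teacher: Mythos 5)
Your proposal is correct: the paper itself states this proposition without proof, citing Proposition 2.1 of~\cite{Peyre2019}, and your argument (linear objective over the compact feasible polytope is minimized at a vertex, plus the Birkhoff--von Neumann characterization of the vertices of the doubly stochastic polytope, plus the $1/n$ normalization bookkeeping relating the Kantorovich and Monge objectives) is essentially the same route taken in that cited reference. No gaps noted; the handling of the rescaling between $\mat{P}_{\sigma^*}$ and $\tfrac{1}{n}\mat{P}_{\sigma^*}$ is the right way to read the statement's mild abuse of notation.
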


In this setting, we seek a a one-to-one mapping $T$. The constraint set   $\mathcal{U}(\vec{a}, \vec{b})$ becomes the set of doubly stochastic matrices with entries, and the coupling matrix has elements
\begin{equation}\label{eq:coupling}
    p_{ij} = \left\{ 
    \begin{array}{cc}
    1/n & \textrm{ if } j = \sigma(i), \\
    0 & \textrm{ otherwise}.
    \end{array}
    \right.
\end{equation}
In the context of our assignment problem, this case occurs when there are an equal amount of agents and targets. A discrete optimal transport formulation can also be applied for a relaxation of the Kantorovich problem so that several agents can be assigned to the same target. This problem can also guarantee binary coupling matrices (essential for our application). For further details, we refer to~\cite{Ferradans2014}.

\subsection{Metrics}
The choice of cost $c(x_i, y_j)$ is problem dependent; however, the most commonly used cost for optimal transport between distributions is the Euclidean distance. Parameterized, by $\alpha \geq 1$, it is given by
\begin{equation}\label{eq:euclid}
    c(x_i, y_j) = \lVert x_i - y_j \rVert^{\alpha},
\end{equation}
where $\lVert \cdot \rVert$ is the Euclidean norm. Using this metric for points, 
$L_c(\vec{a},\vec{b})$ can be viewed as a metric for measures. In this case, $L_c$ is called the $L^\alpha$-Wasserstein distance~\cite{Ferradans2014}. In the statistical community, this metric is also called the \textit{Earth movers distance} (EMD).

This metric implies that the cost of moving a resource $x_i$ to $y_j$ is dominated by the distance between them in Euclidean space. The total cost of the assignment then becomes a sum of the distances. In our application to assignment in dynamical systems, the Euclidean metric may not be most appropriate because it does not account for the underlying dynamics of the system. One of our insights is that using a metric determined by the underlying dynamics of the problems leads to more optimal assignments. 


\section{Assignment in dynamic systems with DOT}\label{sec:dyndot}

In this section we describe how DOT can be applied to minimize Equation~\eqref{eq:opt}. As previously stated, our goals are to determine the assignment policy $\sigma$ that is ``capability-aware.'' In other words, the assignment policy must account for the dynamics of the system --- the capabilities of the agents and the targets. 

A direct application of the EMD metric within DOT would potentially require constant reassignment at each timestep because the metric makes no accountability of the future system state. In other words, it would be \textit{greedy} and simply assign each agent to minimize the total distance
between agent/target pairs.

In the next two subsections we first describe an algorithm that leverages the knowledge of the interception strategies of each agent to make an assignment, and then we provide and discuss pseudocode to illustrate the flexibility of our approach .







\subsection{Algorithm}\label{sec:mancost}

The metric we propose for the transportation cost of assigning $x_i$ to $y_j$ is that corresponding to the optimal actions of a one-agent-to-one-target optimization problem. For instance, let us assume that agent $i$ is paired to target $j = \sigma(i)$, then in the 1v1 scenario for policy $\mu_i$, we have a total incurred cost of 
\begin{equation}\label{eq:1v1cost}
\begin{split}
    J^{\mu_i}_{i\sigma(i)}(x_i(0), &y_{\sigma(i)}(0)) = \\  
    \int_{0}^{\tau_{i}} & g_{i}(t, x_i, \mu_i(t, x_i(t), y_{\sigma(i)}(t)), y_{\sigma(i)}) dt 
\end{split} 
\end{equation}
The optimal policy is obtained by minimizing this value function.

Let $J_{i\sigma(i)} = \min_{\mu} J_{i\sigma(i)}^{\mu_i}$ correspond to the value function under the optimal policy. Our proposed transportation cost is this optimal value function
\begin{equation}\label{eq:cdyn}
    c_{\textrm{dyn}}(x_i(0), y_{\sigma(i)}(0)) = J_{i\sigma(i)}(x_i(0), y_{\sigma(i)}(0)).
\end{equation}

For example, for linear dynamics with quadratic cost the transportation becomes 
\begin{align}
c_{\textrm{dyn}}&(x_i(0),  y_{\sigma(i)}(0)) =  \nonumber \\  &f(x_{i}(0),y_{\sigma(i)}(0))^T P_{i\sigma(i)} f(x_{i}(0),y_{\sigma(i)}(0)) + \nonumber \\
&\quad  2p_{i\sigma(i)}f(x_{i}(0), y_{\sigma(i)}(0)) - \nonumber \\
&\qquad f(x_{i}(0),y_{\sigma(i)})_{ss}^T P_{i\sigma(i)} f(x_{i}(0),y_{\sigma(i)})_{ss} - \nonumber \\
&\quad \qquad 2p_{i\sigma(i)}f(x_{i}(0),y_{\sigma(i)})_{ss} \label{eq:cdyn_lqt}
\end{align}
where $f$ is a function that combines the agent and target state into suitable form. For instance $f(x,y) = x - y$ can be used for a reference tracking problem \cite{willems2004rigorous}; $P_{i\sigma(i)}$ is the solution of the continuous algebraic Riccati equation for the LQR-based tracker; $p_{i\sigma(i)}$ is the feed-forward control of the state being tracked; and  $f(x_{i}(0),y_{\sigma(i)})_{ss}$ is a function that provides the steady-state value for the quadratic agent-target state.

\subsection{Pseudocode}

In this section we provide and describe the pseudocode for the proposed algorithm. A sample implementation that makes specific choices about the dynamics and policies is shown in Algorithm~\ref{alg:dyn}. This algorithm takes as inputs all of the agent states, target states, and dynamics. In Line~\ref{alg:dyn:assign}, the assignment and individual agent policies are obtained by querying Algorithm~\ref{alg:assign}.

Algorithm~\ref{alg:assign} performs the optimal transport allocation. Its inputs are all of the states and their dynamics, an algorithm for computing the policies for each agent when it is assigned to some state, and a cost metric. Algorithm~\ref{alg:dyn} makes two specific choices for these components. First, it uses the linear quadratic tracker (LQT) developed in~\cite{willems2004rigorous} that uses linear dynamics. However, if the dynamics are nonlinear, any other computable policy can be used. The specific cost metric is $\texttt{dist}_{dyn}$, which is the dynamics-based distance given by Algorithm~\ref{alg:dyn_cost}. This algorithm, uses the cost of the LQT policy~\eqref{eq:cdyn_lqt} as the transportation cost. Algorithm~\ref{alg:assign} has two steps. First it calls the discrete optimal transport routine with a pre-specified distance metric to obtain an assignment $\sigma$. It then iterates through all agents and obtains the individual policy for each agent that follows the assignment.

The high-level Algorithms~\ref{alg:dyn} and~\ref{alg:emd} demonstrate the differences between our approach and the standard approach that uses the distance metric (Algorithm~\ref{alg:emd_cost}). Algorithm~\ref{alg:emd_cost} evaluates the distance directly by extracting the positions ($\texttt{pos}$, but an entire state can also be used), and discards $g_{pol}$ --- the cost of the actual policy.  As a result, this assignment needs to be continuous recomputed. In Section~\ref{sec:analysis} we prove that our approach only requires an assignment to be generated once. 

\renewcommand{\algorithmicrequire}{\textbf{Inputs}}
\renewcommand{\algorithmicensure}{\textbf{Outputs}}
\begin{algorithm}
 \caption{Simulation engine with dynamics-based assignment and LQT tracking}
\label{alg:dyn}  
\begin{algorithmic}
  \REQUIRE $n$ agents $\mathcal{A}$, states $X=(x_i)_{i \in \mathcal{A}}$, dynamics $(f_{i})_{i \in \mathcal{A}}$;\\
  \qquad \ $n$ targets $Y = \{y_1,\ldots, y_{n}\}$,  dynamics $(f_{T_i})_{i=1}^n$
  \ENSURE Completion of simulation, all targets are tracked
  \STATE {\footnotesize $\sigma,(\mu_{i})_{i=1}^n = \texttt{assign}(X,Y,(f_i)_{i=1}^n,(f_{T_{i}})_{i-1}^n, LQT, \texttt{dist}_{dyn})$ } \label{alg:dyn:assign}
  \WHILE{$t \leq t_f$}
  \FOR{$(i, x_i) \in \mathcal{A}$}
  \STATE $j = \sigma(i)$  
  \STATE $y_j(t) \leftarrow$ Get target state from environment
  \STATE $u_i(t) = \mu_i(t, x_i, y_j)$
  \STATE $x_i^*(t+dt) = ODEsolve(t, t+dt,f_i, x_i, u_i)$
  \STATE $\mathcal{A} = \texttt{remove-tracked}(\mathcal{A}, (x_{i})_{i=1}^n, (y_{i})_{j=1}^n)$
  \ENDFOR
\ENDWHILE
\end{algorithmic}
\end{algorithm}

\begin{algorithm}
\caption{Assignment policy: $\texttt{assign}$}
\label{alg:assign}
\begin{algorithmic}[1]
  \REQUIRE Set of agents $X = (x_i)_{i=1}^n$; Set of targets $Y = (y_i)_{i=1}^n$; agent dynamics $f_i$; target dynamics $f_{T_i}$; control policy generator $\texttt{pol}$; 1 vs. 1 function cost calculator $\texttt{dist}$  \\
  \ENSURE An assignment policy $\sigma$; agent policies $(\mu_{i})_{i=1}^n$
  \STATE $\sigma = \texttt{DOT}(X, Y, \texttt{dist})$
  \FOR{$i=1\ldots n$}
  \STATE $g_i, \mu_i = \texttt{dist}(x_i, y_{\sigma_{i}}, f_i,f_{T_{\sigma_{i}}})$
  \ENDFOR
  \RETURN $\sigma, (\mu_{i})_{i=1}^n$
\end{algorithmic}
\label{alg:assignment}
\end{algorithm}

\begin{algorithm}
\caption{Dynamics-based cost: $\texttt{dist}_{dyn}$}
\label{alg:dyn_cost}
\begin{algorithmic}[1]
  \REQUIRE Agent state $x$; target state $y$; agent dynamics $f$; target dynamics $f_T$; control policy generator $\texttt{pol}$  \\
  \ENSURE The cost $g$ and the optimal policy $\mu$ for assigning agent $x$ to target $y$
  \STATE $g, \mu = \texttt{pol}(x, y, f, f_T)$
  \RETURN $g, \mu$
\end{algorithmic}
\end{algorithm}

\begin{algorithm}
\caption{Distance-based cost: $\texttt{dist}_{emd}$}
\label{alg:emd_cost}
\begin{algorithmic}[1]
  \REQUIRE Agent state $x$; target state $y$; agent dynamics $f$; target dynamics $f_T$; control policy generator $\texttt{pol}$  \\
  \ENSURE The cost $g$ and the optimal policy $\mu$ for assigning agent $x$ to target $y$
  \STATE $g_{pol}, \mu = \texttt{pol}(x, y, f, f_T)$
  \STATE $g = \lVert \texttt{pos}(x) - \texttt{pos}(y) \rVert$
  \RETURN $g, \mu$
\end{algorithmic}
\end{algorithm}

\begin{algorithm}
 \caption{Simulation engine with distance-based assignment and LQT tracking}
\label{alg:emd}  
\begin{algorithmic}
  \REQUIRE $n$ agents $\mathcal{A}$, states $X=(x_i)_{i \in \mathcal{A}}$, dynamics $(f_{i})_{i \in \mathcal{A}}$;\\
  \qquad \ $n$ targets $Y = \{y_1,\ldots, y_{n}\}$,  dynamics $(f_{T_i})_{i=1}^n$
  \ENSURE Completion of simulation, all targets are tracked
  \WHILE{$t \leq t_f$}
  \STATE $X, Y \leftarrow \texttt{extract-active-agents}(\mathcal{A})$
  \STATE {\footnotesize $\sigma,(\mu_{i})_{i=1}^n = \texttt{assign}(X,Y,(f_i)_{i=1}^n,(f_{T_{i}})_{i-1}^n, LQT, \texttt{dist}_{emd})$} 
  \FOR{$(i, x_i) \in \mathcal{A}$}
  \STATE $j = \sigma(i)$  
  \STATE $y_j(t) \leftarrow$ Get target state from environment
  \STATE $u_i(t) = \mu_i(t, x_i, y_j)$
  \STATE $x_i^*(t+dt) = ODEsolve(t, t+dt,f_i, x_i, u_i)$
  \STATE $\mathcal{A} = \texttt{remove-tracked}(\mathcal{A}, (x_{i})_{i=1}^n, (y_{i})_{j=1}^n)$
  \ENDFOR
\ENDWHILE
\end{algorithmic}
\end{algorithm}

\section{Analysis}\label{sec:analysis}
In this section we analyze the proposed algorithms. Our aim is to show that the optimization problem~\eqref{eq:opt} can be reformulated into the Monge-Kantorovich optimal transport problem. We follow a two step procedure. First we show that the optimal assignment policy of Equation~\eqref{eq:pol} does not change with time, and next we show that the problem is identical to the Monge problem.

\begin{proposition}[Constant assignment policy]
The optimal assignment policy $\sigma_t$ for minimizing~\eqref{eq:opt} is fixed over time. In other words, if $\sigma_{t_c}(i) = j$ for a fixed $i$ and $j$ at time $t_c$, then for any $t > t_{c}$ we also have $\sigma_{t_c}(i) = j$.
\end{proposition}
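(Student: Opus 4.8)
The plan is to exploit the separable, additive structure of the stage cost in Equation~\eqref{eq:stage-sep} together with the fact that the 1v1 value function $J_{i\sigma(i)}$ is itself the integral of the per-agent stage cost over the remaining horizon. First I would observe that, because the agents do not interact (no coupling in the dynamics and no coupling in $g$), once an assignment $\sigma_t$ is declared the total cost over $[t,\tau]$ decomposes as a sum of the individual costs $J^{\mu_i}_{i\sigma_t(i)}(x_i(t),y_{\sigma_t(i)}(t))$, each of which depends only on the pair $(i,\sigma_t(i))$ and the current states. Minimizing the continuation cost at time $t$ is therefore exactly the static assignment problem $\min_\sigma \sum_i J_{i\sigma(i)}(x_i(t),y_{\sigma(i)}(t))$ with the optimal per-agent policies $\mu_i$ plugged in.

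The key step is a Bellman/dynamic-programming consistency argument. Suppose for contradiction that the optimal policy uses assignment $\sigma_{t_c}$ on $[t_c, t_c+\delta]$ and then switches to a different assignment $\sigma'$ at time $t_c+\delta$. Consider instead the policy that commits to $\sigma'$ from time $t_c$ onward, with each agent following the optimal 1v1 control for its $\sigma'$-target. I would argue that this alternative is no worse: the optimal 1v1 value function along any optimal trajectory satisfies the principle of optimality, so the cost-to-go of holding assignment $\sigma'$ from $t_c$ is the minimum over all controls respecting that assignment, which is $\le$ the cost incurred by first steering toward the $\sigma_{t_c}$-targets and only later switching. Hence any assignment that is optimal at some later time is also optimal at $t_c$, and by a standard interchange argument one may take the optimal assignment to be the one that is optimal for the continuation problem at every instant — i.e., constant. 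The argument is cleanest if one notes that the continuation optimization $\min_\sigma\sum_i J_{i\sigma(i)}(x_i(t),y_{\sigma(i)}(t))$ has, for each fixed $\sigma$, a value that evolves in $t$ precisely by subtracting the stage cost already paid (since $J_{i\sigma(i)}$ is a true cost-to-go), so the ranking of assignments by continuation cost is preserved along optimal trajectories, and in particular the $\argmin$ over $\sigma$ does not change.

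I expect the main obstacle to be making the interchange rigorous at the level of the active-set dynamics: an interception event ($\lVert \pi_x(x_i)-\pi_y(y_j)\rVert \le \epsilon$) removes a pair and shrinks $\mathcal{A}(t),\mathcal{T}(t)$, so the continuation problem at time $t$ lives over a smaller index set than at $t_c$, and one must check that the restriction of a constant optimal assignment to the surviving agents is still optimal for the reduced problem. This should follow from the additive separability — removing a completed pair just deletes one fixed, already-paid summand and cannot change which assignment minimizes the remaining sum — but it requires care about whether a later switch could allow an agent to intercept a different target \emph{sooner} and thereby change the set of future events. The cleanest resolution is to argue that such a switch could have been made at $t_c$ already without increasing cost, again by the principle of optimality for the 1v1 subproblems, so no generality is lost by assuming the $\argmin$ assignment is adopted from the start and held fixed.
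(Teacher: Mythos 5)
Your proposal is correct at the same level of rigor as the paper and follows essentially the same route: an exchange argument in which a policy that switches assignments is compared against one that commits to the (final) assignment from the switch time onward, with the optimal 1v1 value functions $J_{i\sigma(i)}$ serving as the lower bound via the principle of optimality. The only cosmetic difference is that you invoke the additive separability of the stage cost~\eqref{eq:stage-sep} directly for $n$ agents (and explicitly flag the active-set bookkeeping), whereas the paper argues the two-agent case with a case split on whether the final assignment equals the initial one and then reduces $n$ agents to two by augmentation.
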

\begin{proof}
  We consider the case of two agents first, and then extend our approach to multiple agents through induction. We start with the case of two agents $x_1$ and $x_2$ and two targets $y_1$ and $y_2$. We will compare two policies: a time varying policy that includes at least one switch  $\hat{\mu}^{(2)}(t) = (\sigma_t, \hat{\mu}_1, \hat{\mu}_2),$ and a second policy $\mu^{(2)} = (\sigma, \mu_1, \mu_2)$ that does not switch and where $\mu_1$ and $\mu_2$ minimize Equation~\eqref{eq:1v1cost}. 

We first consider the case where $\hat{\mu}^{(2)}(t)$ both contains switches and the final assignment is equivalent to the initial assignment.  In this case let $t_1$ denote the time the first switch occurs and $t_2$ denote the time of the final switch back to the original assignment. Without loss of generality, assume that the initial assignment is given by $\sigma_0(i) = i$. Let $\hat{\mathcal{A}}(t)$ denote the active agents under the policy and $\hat{\tau}$ and $\hat{\tau}_i$ denote the exit times. The cost associated with $\hat{\mu}^{(2)}$ is
\begin{align*}
&J_{\hat{\mu}^{(2)}}(S(0)) = \int_{0}^{\hat{\tau}} \sum_{i=1,i \in \mathcal{\hat{A}}(t)}^n g_{i}(x_{i}, u_{i}, y_{\sigma_t(i)}) dt
\end{align*}
Letting $\delta_{i \in \mathcal{A}(t)}$ denote an indicator function, we can rewrite the total cost as a sum over each agent
\begin{align}
J_{\hat{\mu}^{(2)}}(S(0)) = \sum_{i=1}^{2} \int_{0}^{\hat{\tau}} \delta_{i \in \mathcal{A}(t)} g_{i}(x_i, \hat{\mu}_i(x_i), y_{\sigma_{t}(i)}) dt \label{eq:sep-delta} 
\end{align}
Now we can break up the integral into three section corresponding to the cost before the switch at $t_1$, between $t_1$ and $t_2$, and after $t_2$. Denoting $\hat{\tau_i}$ denote the exit time of agent $i$ we have 
\begin{align*}
J_{\hat{\mu}^{(2)}}(S(0)) &= \sum_{i=1}^{2} \Big[ \int_{0}^{t_1} g_{i}(x_i, \hat{\mu}_i(x_i), y_{\sigma(i)})dt + \\ 
& \qquad \qquad \int_{t_1}^{t_2} g_{i}(x_i, \hat{\mu}_i(x_i), y_{\sigma_t(i)})dt + \\
& \qquad \qquad \qquad \int_{t_2}^{\hat{\tau}_1} g_{i}(x_i, \hat{\mu}_i(x_i), y_{\sigma(i)} )dt \Big]
\end{align*}
Finally, suppose that $\mu^{(2)} = \hat{\mu}^{(2)}(0) = (\sigma_0, \mu_1, \mu_2)$ is a policy that maintained the original assignment. In this case, because $\mu_{1}$ and $\mu_2$ are optimized for the original assignment, they clearly result in policies that have lower costs than $\hat{\mu}_1$ and $\hat{\mu}_2$ incurred during the time period from $t_1$ and $t_2.$ In other words, since each agent ends up targeting the same target that it initially targeted, it is at least more effective to directly follow the policy to the target than to have intermediate deviations to the other target, i.e., 
\begin{equation*}
\resizebox{0.48\textwidth}{!}{$
    J_{\hat{\mu}^{(2)}}(S(0)) \geq \sum_{i=1}^{n} \int_{0}^{\tau_1} g_{i}(x_i, \mu_i(x_i), y_{\sigma(i)})dt = J_{\mu^{(2)}}(S(0)).
    $}
\end{equation*}
An identical argument follows for the case where the final policy is different than the initial policy. In that case, we would set $\mu^{(2)} = \hat{\mu}^{(2)}(\hat{\tau}).$

The case for more than two agents and targets follows by noticing that any system of $n$ agents can be analyzed by considering a system of two modified agents. The first modified agent is the augmentation of the first $n-1$, and the last one is the $n$th agent. 
Then the scenario is identical to the 2v2 assignment and the same argument follows.
\end{proof}

Now that we have shown that the optimal assignment is time independent, we can show that the minimizer of our our stated optimization problem~\eqref{eq:opt} is the same as that of the Monge problem~\eqref{eq:monge}.

\begin{theorem}[Optimization problem equivalence]
The optimal solution of the assignment problem given by the optimization problem~\eqref{eq:opt} is equivalent to that obtained by minimizing the Monge problem~\eqref{eq:monge} when the 1v1 cost function~\eqref{eq:cdyn} is used as the transportation cost.
\end{theorem}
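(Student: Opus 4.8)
The plan is to chain three reductions: use the preceding proposition to freeze the assignment map, exploit the separability of the dynamics and of the stage cost \eqref{eq:stage-sep} to split the value function \eqref{eq:opt} into independent single-agent problems, and finally recognize the residual minimization over assignments as the Monge problem \eqref{eq:monge} with transportation cost \eqref{eq:cdyn}.

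First I would invoke the preceding proposition: a minimizer of \eqref{eq:opt} has a time-invariant assignment component, so every candidate optimal policy can be written $\mu = (\sigma, \mu_1, \dots, \mu_n)$ with $\sigma : \llbracket n \rrbracket \to \llbracket m \rrbracket$ fixed, exactly as in \eqref{eq:pol}. Consequently the minimization over $\mu$ factors into an outer minimization over the discrete map $\sigma$ and an inner minimization over the individual feedback controls $\mu_i$.

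The crux is the decomposition of the inner integral. Substituting \eqref{eq:stage-sep} together with the indicator $\delta_{i \in \mathcal{A}(t)}$ as in \eqref{eq:sep-delta}, and using that agent $i$'s dynamics and stage cost $g_i$ couple only $x_i$ to its assigned target $y_{\sigma(i)}$ (there is no inter-agent interaction), one gets
\begin{align*}
\int_0^{\tau} g\,dt &= \sum_{i=1}^{n} \int_0^{\tau_i} g_i\big(x_i, \mu_i(x_i), y_{\sigma(i)}\big)\,dt \\
&= \sum_{i=1}^{n} J^{\mu_i}_{i\sigma(i)}\big(x_i(0), y_{\sigma(i)}(0)\big),
\end{align*}
where $\tau_i$ is the exit time of agent $i$, an inactive agent contributes nothing further, and $\tau = \max_i \tau_i$; this is precisely the 1v1 cost \eqref{eq:1v1cost}. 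Because each summand depends on a single $\mu_i$, the inner minimization separates and, by \eqref{eq:cdyn}, equals $\sum_{i} c_{\textrm{dyn}}(x_i(0), y_{\sigma(i)}(0))$. Hence $\min_{\mu}\int_0^{\tau} g\,dt = \min_{\sigma}\sum_{i=1}^{n} c_{\textrm{dyn}}(x_i(0), y_{\sigma(i)}(0))$. Identifying $T(x_i) = y_{\sigma(i)}$ and noting that the requirement that every active target eventually be tracked --- built into the exit set $\mathcal{O}$ and hence into $\tau$ --- is exactly the push-forward constraint $T_\sharp \mu_X = \mu_Y$ for uniform weights, the right-hand side is the Monge problem \eqref{eq:monge} with $c = c_{\textrm{dyn}}$; so the two minimizers coincide. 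For $n \ne m$ one appeals instead to the relaxed Kantorovich form \eqref{eq:kantorovich} with $T$ merely surjective, as in the matching proposition.

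The hard part will be making the decomposition rigorous: the active-set bookkeeping behind $\tau = \max_i \tau_i$, the claim that an agent following any admissible (in particular the optimal) $\mu_i$ reaches its $\epsilon$-ball in finite time so that $\tau_i < \infty$ and post-interception contributions vanish, and the reduction of the admissible $\mu_i$ from functions of the full state $S(t)$ to functions of $(x_i, y_{\sigma(i)})$ alone --- the last point following from the complete decoupling of dynamics and cost across agents established in the proof of the preceding proposition.
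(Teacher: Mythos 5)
Your proposal follows essentially the same route as the paper's own proof: freeze the assignment via the constant-assignment proposition, split the cost through the indicator $\delta_{i\in\mathcal{A}(t)}$ into per-agent integrals over $[0,\tau_i]$, bound/replace each by the optimal 1v1 value $c_{\textrm{dyn}}$, and read off the push-forward constraint from the requirement that exit times correspond to interception. Your phrasing as a nested $\min_\sigma \min_{\mu_i}$ is just the explicit form of the paper's inequality-plus-attainment argument, and your flagged rigor concerns (exit-time bookkeeping, decoupling of the per-agent policies) are points the paper also leaves implicit.
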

\begin{proof}
 We use the fact that the optimal policy maintains a fixed index assignment vector for all time, i.e., $\sigma_t = \sigma$. Let $S(0)$ denote the initial state of the system, then the cost for any initial state can be represented as
\begin{align*}
J(S(0)) &= \int_{0}^{\tau} \sum_{i=1,i \in \mathcal{A}(t)}^n g_{i}(x_{i}, u_{i}, y_{\sigma(i)}) dt \\
        &= \sum_{i=1}^n \int_{0}^{\tau} \delta_{i \in \mathcal{A}(t)} g_{i}(x_{i}, u_{i}, y_{\sigma(i)}) dt  \\
        &= \sum_{i=1}^n \int_{0}^{\tau_{i}} g_{i}(x_{i}, u_{i}, y_{\sigma(i)}) dt \\ 
        &\geq \sum_{i=1}^n c_{\textrm{dyn}}(x_i(0), y_{\sigma(i)}(0))
\end{align*}
where the first equality came from Equations~\eqref{eq:opt_val} and~\eqref{eq:stage-sep} the second equality follows the same argument as Equation~\eqref{eq:sep-delta}; the third equality follows from the definition of $\tau_i$; and the final inequality follows from the definition of $c_{\textrm{dyn}}$ in Equation~\eqref{eq:cdyn}. Because of the definitions of 1v1 exit times $\tau_i$, we implicitly the cost function to only those policies where the agents reach their targets, i.e., $T_{\sharp}\mu_A = \mu_T$ where $\mu_A$ is the initial distribution of the agents and $\mu_T$ is the distribution of the targets at interception. 
Strict equality is obtained when the policies $\mu_i$ correspond to the optimal policies 1v1 policies that minimize~\eqref{eq:1v1cost} so that $u_i(t)$ are generated by $\mu_i$. Thus, we have proved the stated result.
\end{proof}

\section{Simulation Results}\label{sec:experiments}

We now numerically demonstrate the effectiveness of our approach through several simulated examples. In each example, we have used the 
Python Optimal Transport library~\cite{Flamary2017pot} to solve the underlying DOT problem. In each case, the dynamics are integrated via the RK45 integration scheme.

\subsection{Double integrators in three dimensions}

In this section we demonstrate that using the dynamics-based cost function over the standard distance-based Wasserstein metric yields significant savings that increases with size of the system. For the various examples we will consider agent/target systems of sizes  5 vs. 5, 10 vs. 10, 20 vs. 20 and 100 vs. 100. 

This set of examples uses a simple system of double integrators in three dimensions, where the velocity term is directly forced. The evolution of the state of each agent $x_i = (x_{i1}, x_{i2}, x_{i3}, x_{i4},  x_{i5}, x_{i6})$ is given by
\begin{equation*}
    \dot{x}_{ij} = x_{i(j+3)} 
\end{equation*}
for $j = 1,2,3$ and 
\begin{equation*}
    \dot{x}_{ij} = u_{i(j-3)}
\end{equation*} 
for $j = 4, 5, 6,$ where each agent has three control inputs (one for each dimension). The target dynamics are identical to the agent dynamics.

Each agent uses an infinite horizon linear-quadratic tracking policy of~\cite{willems2004rigorous} where the stage cost of an assignment is given by
\begin{equation}\label{eq:value_function_dyn}
\begin{split}
g(x_i, u_i, y_{\sigma(i)} = e_{i\sigma(i)}(t)^TQ_ie_{i\sigma(i)}(t) + u_i(t)^TR_iu_i(t) -\\ (e_{{ss}_{i\sigma(i)}}^TQ_ie_{{ss}_{i\sigma(i)}} + u_{{ss}_i}^TR_iu_{{ss}_i})
\end{split} 
\end{equation}
where $e_{i\sigma(i)}(t)$ and $e_{{ss}_{i\sigma(i)}}$ are defined as the transient error and steady-state error between the $ith$ agent state and an assigned-to target state, respectively; $u_i$ is the control input to drive the $ith$ agent to the assigned-to target; and $u_{{ss}_i}$ is the control input for the $ith$ agent to the assigned-to target operating at steady-state conditions. For the weight matrices we choose $Q_i = \textrm{diag}(10^3, 10^3, 10^3, 0, 0, 0),$ where the nonzero weights correspond to the errors in positions in each dimension and the zero weights correspond to the errors in velocity. The control penalty is chosen to be $R_i = I_{3 \times 3}.$ The targets use an identical tracking policy; however they track certain fixed positions in space. 

The initial conditions of the system consist of the positions and velocities of each agent and target, a set of stationary locations that are tracked by the targets, and a set of assignments from each target to the the stationary location. These conditions are randomly generated for the following results.

The initial conditions of the agents consist of uniformly distributed position and velocity components on an interval of $-1000$ to $1000$ and $-5000$ to $5000$, respectively. The initial conditions of the targets position are equivalent, but with velocity parameters following a uniform distribution between $-1000$ to $1000$. The terminal target locations were randomly selected on a uniform distribution between $-1000$ and $1000$. 

\begin{figure}[h!]
\centering
\includegraphics[width=0.5\textwidth]{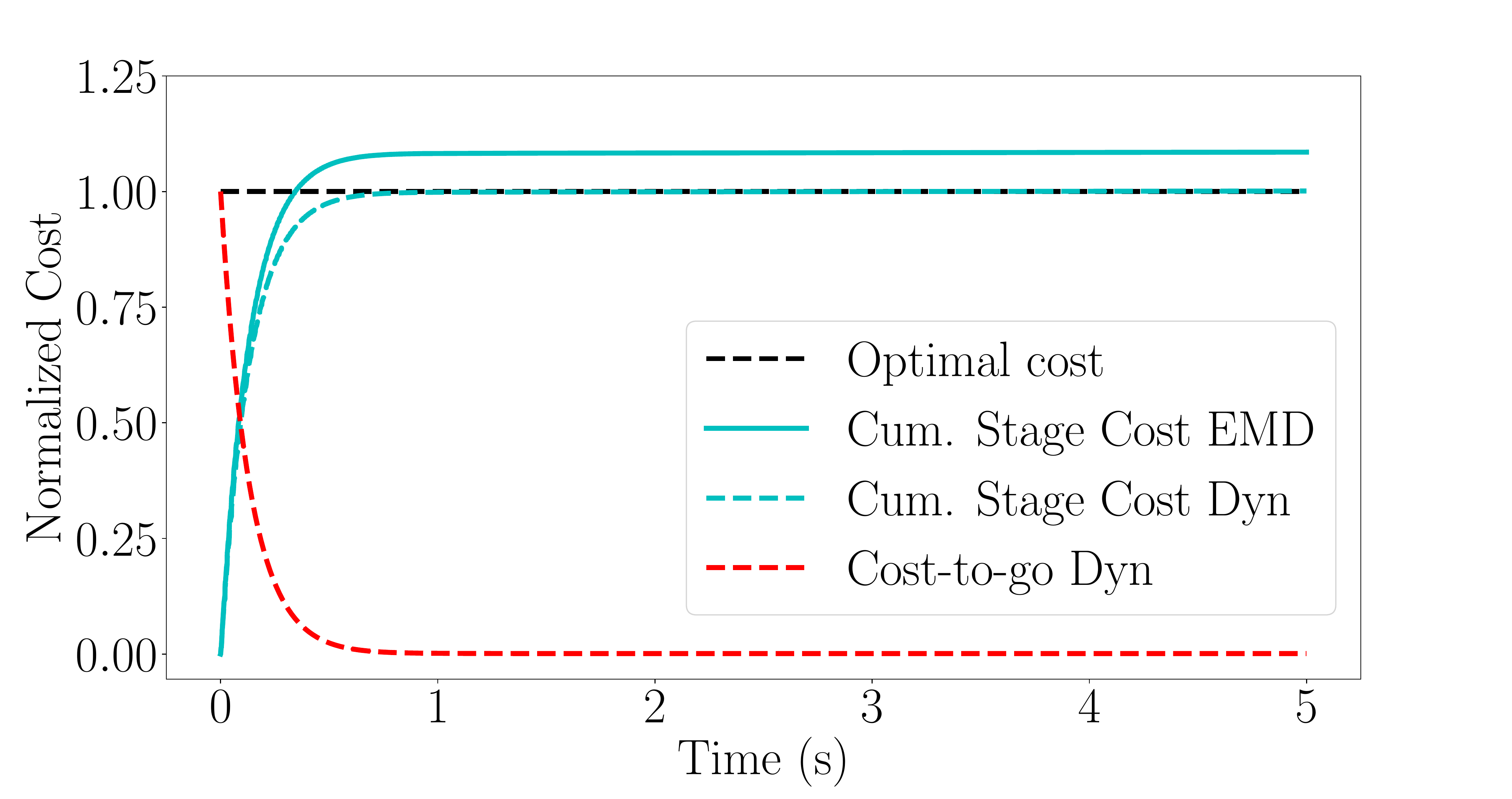}
\caption{Normalized costs incurred by 100 agents tracking 100 targets. The cumulative costs over time for the assignment policy that uses EMD is shown by the solid cyan line. The cost of this policy exceeds the optimal cost given by the dotted black line. This optimal cost is computed by summing the optimal costs of the value function for each agent under the optimal assignment. The cumulative costs of the dynamics-based assignment approaches this optimal value, as expected.}
\label{fig:cumcost}
\end{figure}

In Figure~\ref{fig:cumcost}, we show the cumulative control costs incurred by a system of 100 agents while they are attempting to tracking 100 targets. Recall that the EMD-based objective assigns agents to targets with the aim of minimizing the total Euclidean distance. This assignment does not account for the dynamics of the agent and as a result, it performs worse than the dynamics-based assignment which accounts for the effort to actually get the agent to its assigned target. Mechanically, this performance difference results because agents are either incorrectly assigned at the beginning or because agents switch assignments over the course of their operations. For this simulation, the EMD-based policy checks whether reassignment is necessary every 0.1 seconds. 

Because visualizing the movements of 100 agents and targets is difficult, we demonstrate prototypical movements for a 5 vs. 5 system in Figures~\ref{fig:trajectory} and its X-Y projection ~\ref{fig:2dtrajectory} These figures demonstrate both the optimal trajectories of the agents and targets under the dynamics-based optimal assignment and the sub-optimal trajectories of the EMD-based assignment. Agents A0 and A1, for example, take significantly different paths to different targets. The movements corresponding to the EMD-based policy require more manuevering.  

The dynamics-based policy leverages the dynamic capabilities of the agents to select the targets that each individual would optimally be able to track over time. Finally, note that the individual agent controllers that we use are fundamentally \textit{tracking} controllers, thus the agents act to match the velocity and position of their targets. This is the reason why several maneuvers show the agent passing and then returning to the target --- for instance A1 to T3 under the EMD policy.

\begin{figure}[h!]
\centering
\includegraphics[width=0.5\textwidth]{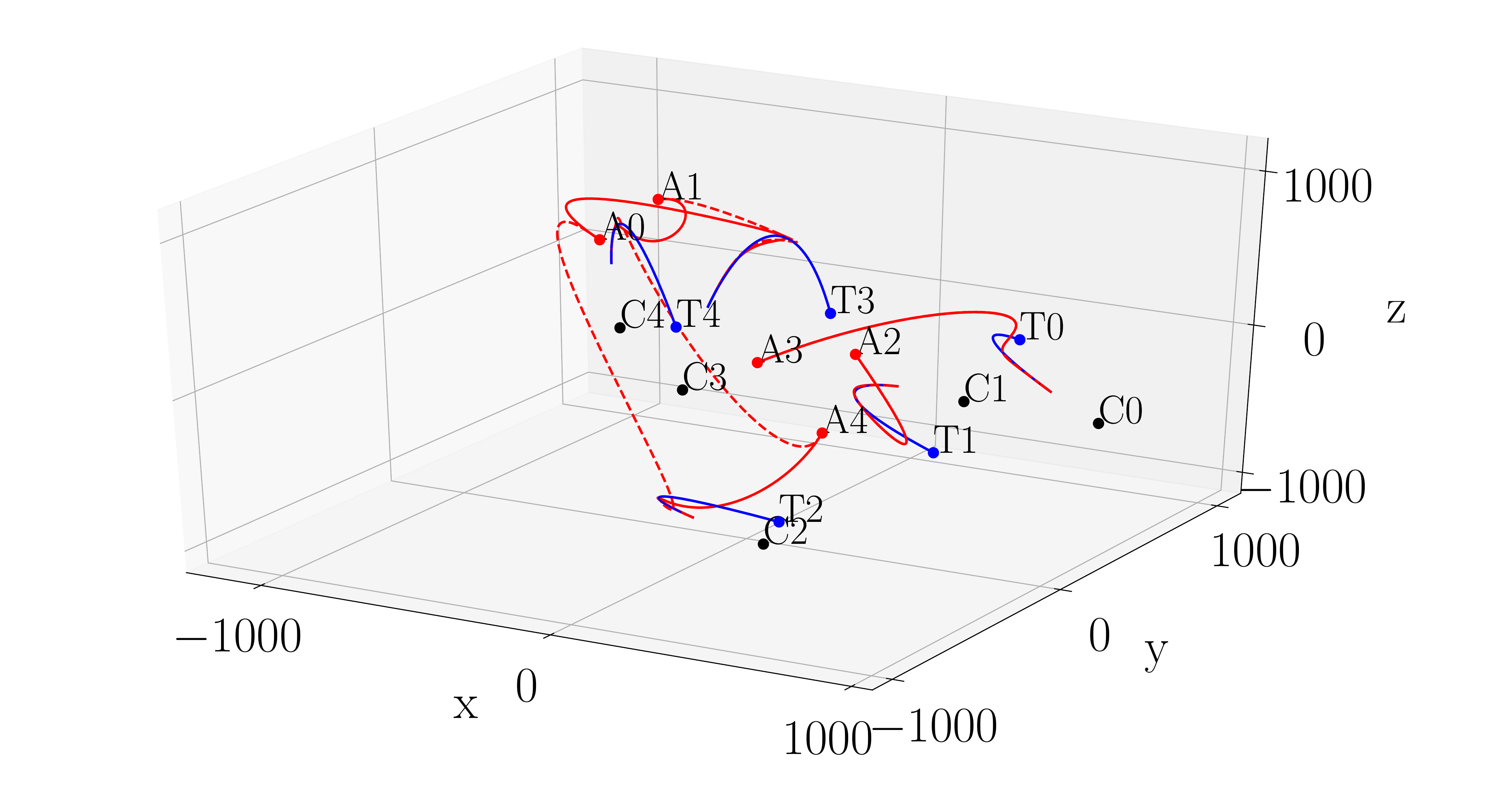}
\caption{Trajectories of agents and targets in a 5 vs 5 system. The trajectories for an assignment policy that accounts for the dynamics (Dyn) are qualitatively different than the assignment policy that uses the Wasserstein distance (EMD). For labels of each path see the X-Y projection in Figure~\ref{fig:2dtrajectory}.}
\label{fig:trajectory}
\end{figure}

\begin{figure}[h!]
\centering
\includegraphics[width=0.5\textwidth]{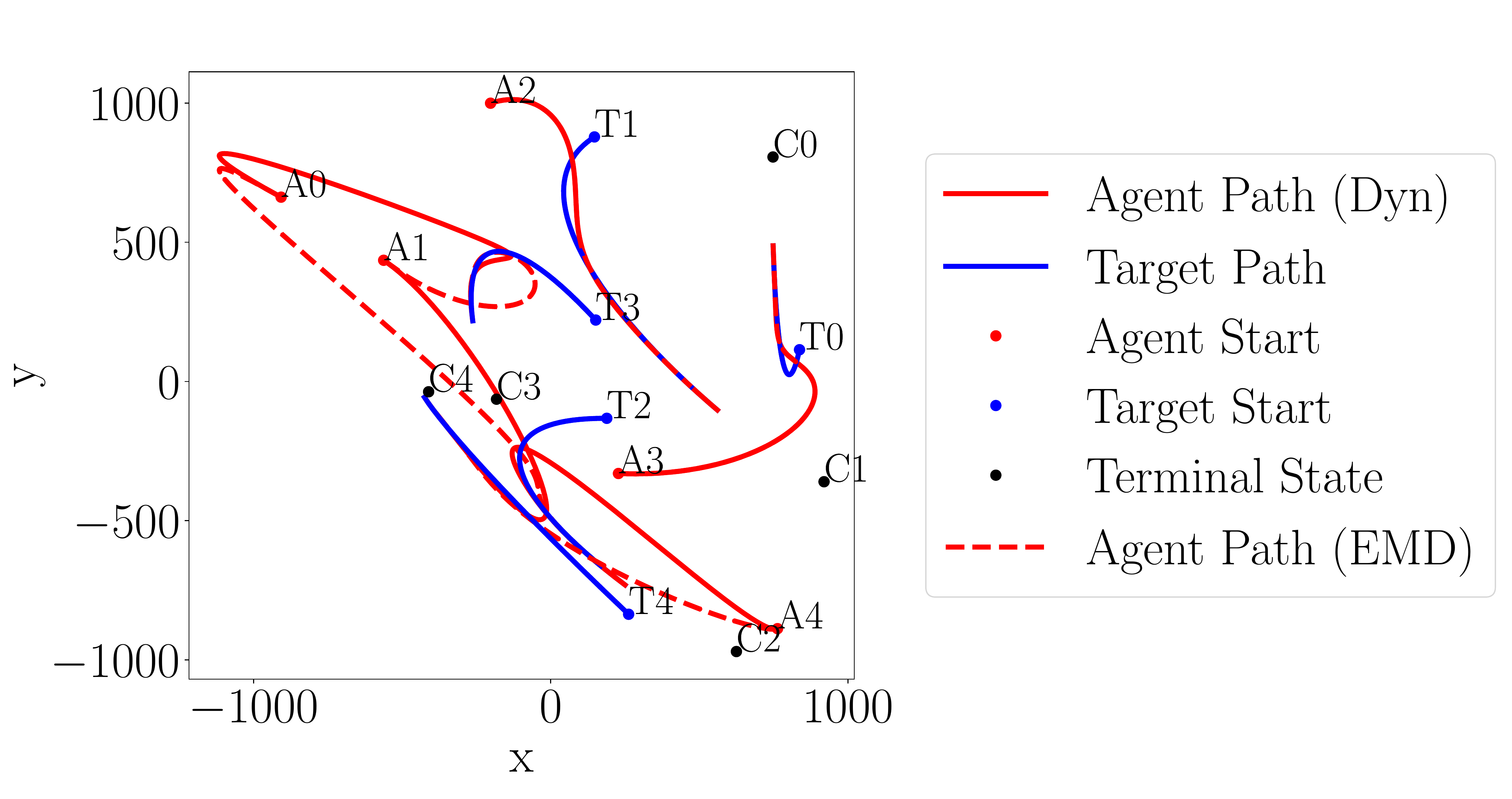}
\caption{Projection of the trajectories of Figure~\ref{fig:trajectory} onto the X-Y plane.}
\label{fig:2dtrajectory}
\end{figure} 
Since the dynamics-based assignment policy selects an optimal assignment at initial time, it offers significant control cost benefits over assignments that continually reassign the agents based on the EMD.

The benefit of dynamics-based assignment grows with the size of the system. To demonstrate this fact we use perform Monte Carlo simulations of one hundred realizations of a 5 vs 5, 10 vs 10, and 20 vs 20 system by sampling over initial conditions. As the complexity of the engagement increases, the amount of additional control effort required by the EMD based assignment grows, shown by  Figure~\ref{fig:cost_histogram}.

Furthermore,  Figure~\ref{fig:assignment_avg_switches} illustrates that as the system size grows the EMD-based policy performs more switches. This fact contributes to the observed loss in efficiency of the EMD-based policy.

\begin{figure}[ht!]
\centering
\includegraphics[width=0.5\textwidth]{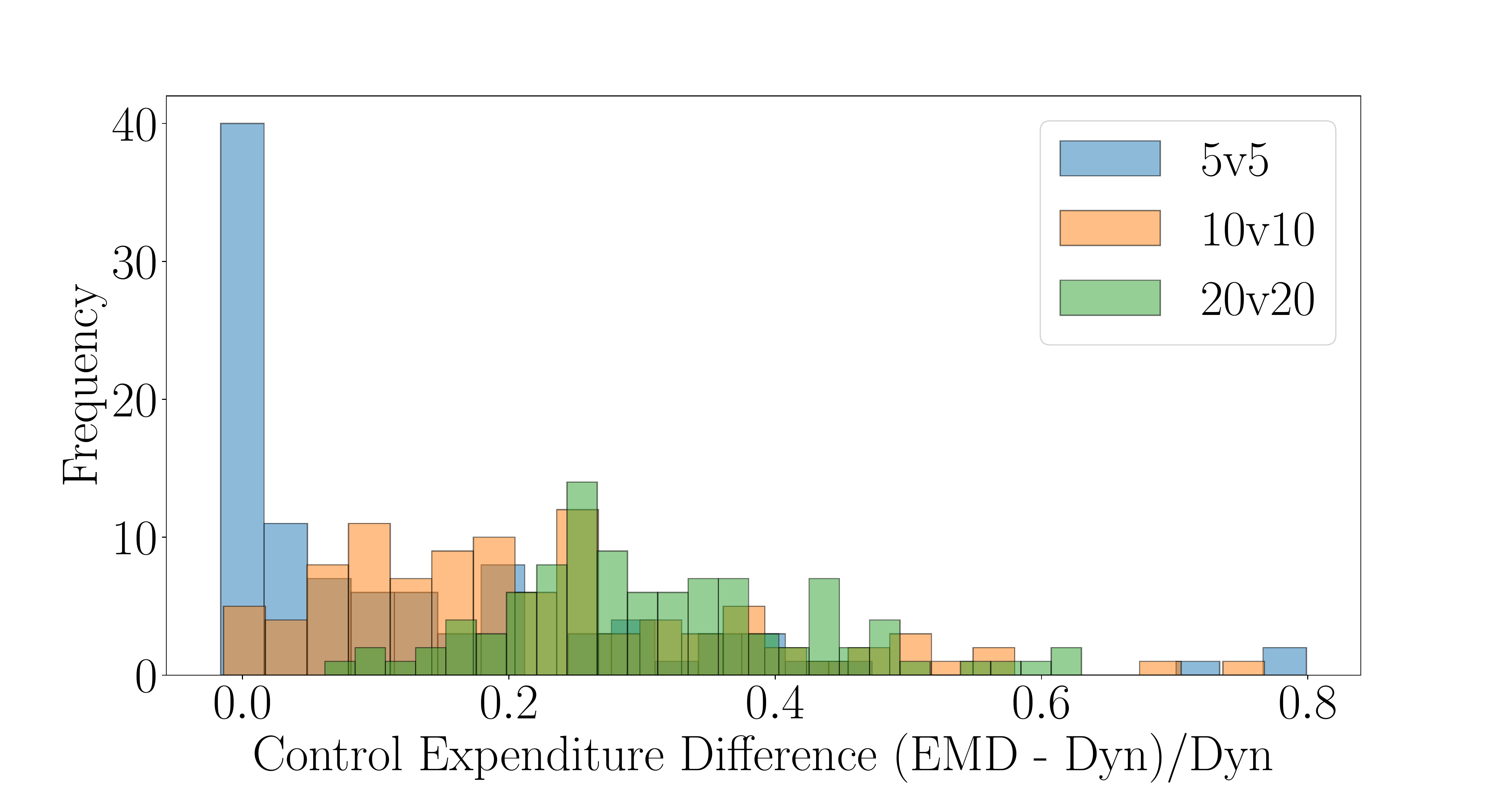}
\caption{The EMD-based assignment policy becomes increasingly less effective as the size of the system grows. Histograms of 100 Monte Carlo simulations obtained by sampling initial conditions for various system sizes are shown. As the system size increases, the distribution of the difference between the control effort of the EMD and Dyn based assignments increases.}
\label{fig:cost_histogram}
\end{figure}

\begin{figure}[ht!]
\centering
\includegraphics[width=0.5\textwidth]{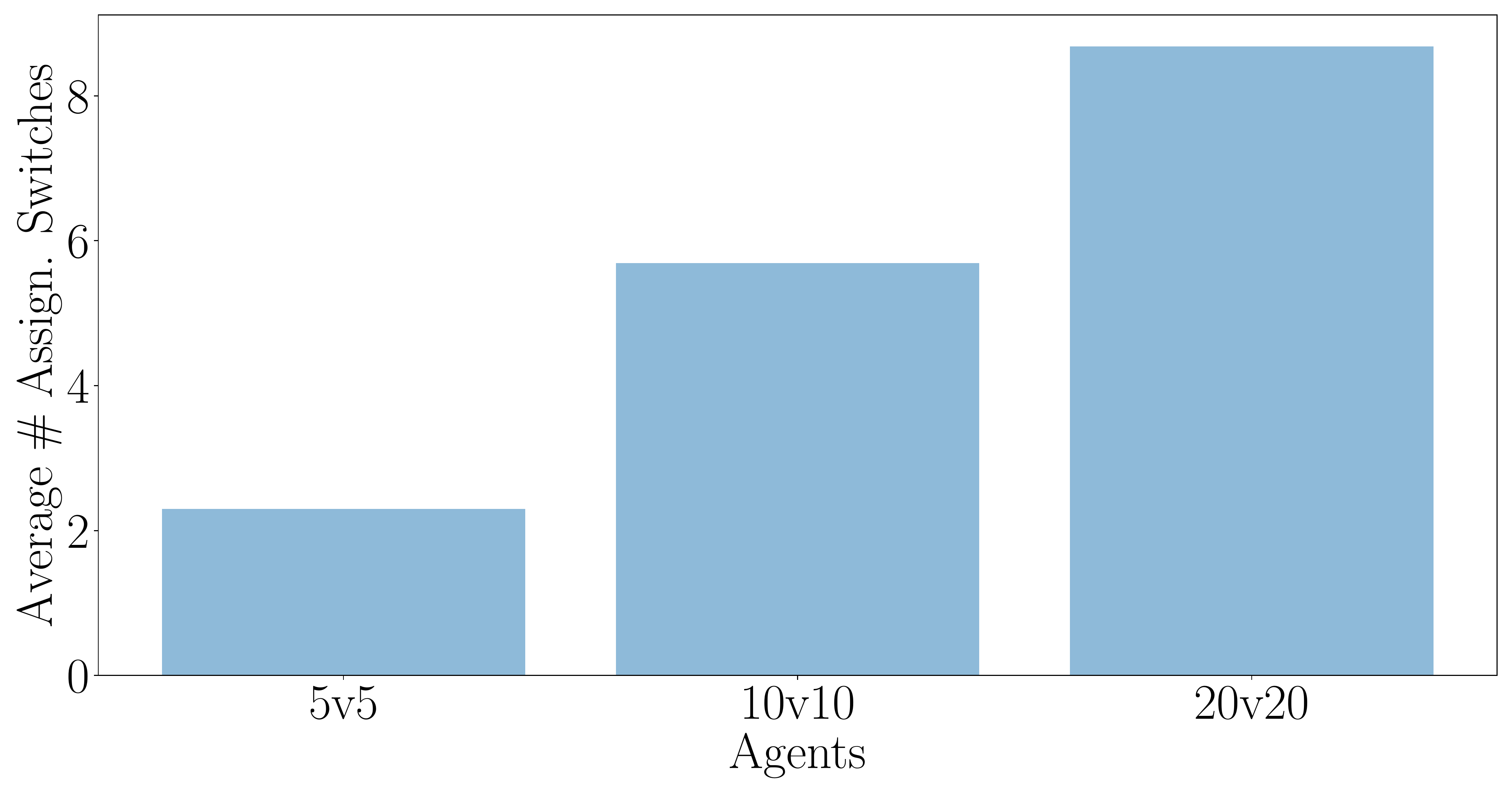}
\caption{ Monte Carlo simulations for 5v5, 10v10, and 20v20 systems reveal that the average number of Agent-Target assignment switches for EMD-based assignments positively correlates with the size of the systems. }
\label{fig:assignment_avg_switches}
\end{figure} 

\subsection{Linearized Quadcopter}
We now compare the algorithms on swarms of linearized quadcopter dynamics \cite{sabatino2015quadrotor} that are slightly modified versions of double integrators. The dynamics of both the agents and the targets in this case are given by

\begin{align*}
    \mathbf{\hat{f}(x,u,d)} =
    \begin{cases}
           \dot{x} = u \\
           \dot{y} = v \\
           \dot{z} = w \\
           \dot{\psi} = r \\
           \dot{\theta} = q \\
           \dot{\phi} = p \\
           \dot{u} = -g\theta + \frac{f_{wx}}{m} \\
           \dot{v} = g\phi + \frac{f_{wy}}{m} \\
           \dot{w} = \frac{f_{wx} - f_t}{m} \\
           \dot{p} = \frac{\tau_x + \tau_{wx}}{I_{xx}} \\
           \dot{q} = \frac{\tau_y + \tau_{wy}}{I_{yy}} \\ 
           \dot{r} = \frac{\tau_z + \tau_{wz}}{I_{zz}}
    \end{cases}
\end{align*}
where the twelve dimensional state space
\begin{align*}
    \mathbf{x} = 
    \begin{bmatrix}
        x & y & z & \psi & \theta & \phi & u & v & w & p & q & r
    \end{bmatrix}^T
    \in \mathbf{R}^{12}
\end{align*}

 consists of the position, attitude, translational velocity, and rotational velocity components of the vehicle. The parameters of the system are $m = 0.1 \textrm{kg}$, $I_{xx} = 0.00062 \textrm{kg-m}^2$, $I_{yy} =  0.00113 \textrm{kg-m}^2$, $I_{zz} = 0.9(I_{xx} + I_{yy})$ and $g = 9.81 \textrm{m/s}^2,$ respectively. Linearization was performed under small oscillation and small angle approximations. Furthermore, we will assume no wind disturbance forces and torques, $d = \left[f_{wx}, f_{wy}, f_{wz}, \tau_{wx}, \tau_{wy}, \tau_{wz}\right] = 0$. The control inputs are four dimensional and consist of the forces and torques $u = \left[f_t, \tau_x, \tau_y, \tau_z\right]$ that act on the vertical thrust and angular motions about the three principal axes.

 The initial positions and velocities of the agents are sampled uniformly between $-100$ to $100$ and $-500$ to $500$, respectively. The initial velocities of the targets were sampled uniformly between $-50$ to $50$. The attitude and rotational velocity terms for both agents and targets were uniformly distributed between $-2\pi$ and $2\pi$ and $-25$ and $25$, respectively. The terminal target locations were randomly selected from a uniform distribution between $-100$ and $100$. The control parameters for the agents and targets are updated to $Q_i = diag(10^3, 10^3, 10^3, 10^3, 10^3, 10^3, 0, 0, 0, 0, 0, 0)$ and $R_i = I_{4x4}$, respectively.

Similar to the double integrator systems, the dynamics-based assignment policy is able to optimally assign the more complex quadcopter agents to complete their tracking task with minimal cost. Figure~\ref{fig:cumcost_quadcopter} illustrates the cumulative cost expended by the agent swarm and once again showcases the optimality of the dynamics-based assignment method. Unlike the EMD policy, the complete dynamic information of the swarm members are used in the decision-making process as opposed to only the euclidean distance components. In the end, the EMD-based assignment policy incurs a cost that is 1.7 times greater than the dynamics-based assignment policy.

Figures~\ref{fig:trajectory_quadcopter} and Figures~\ref{fig:2dtrajectory_quadcopter} reveal the paths taken by the agents managed by the EMD and Dyn policies. Agents 1 and 4, in particular, are allowed to take advantage of their initial dynamic states to cheaply track their targets, instead of being reassigned (by the EMD-based policy) mid-flight to closer targets that appear. In this case, the reassignment causes extreme turning maneuvers that require significant control expense. %

\begin{figure}[ht!]
\centering
\includegraphics[width=0.5\textwidth]{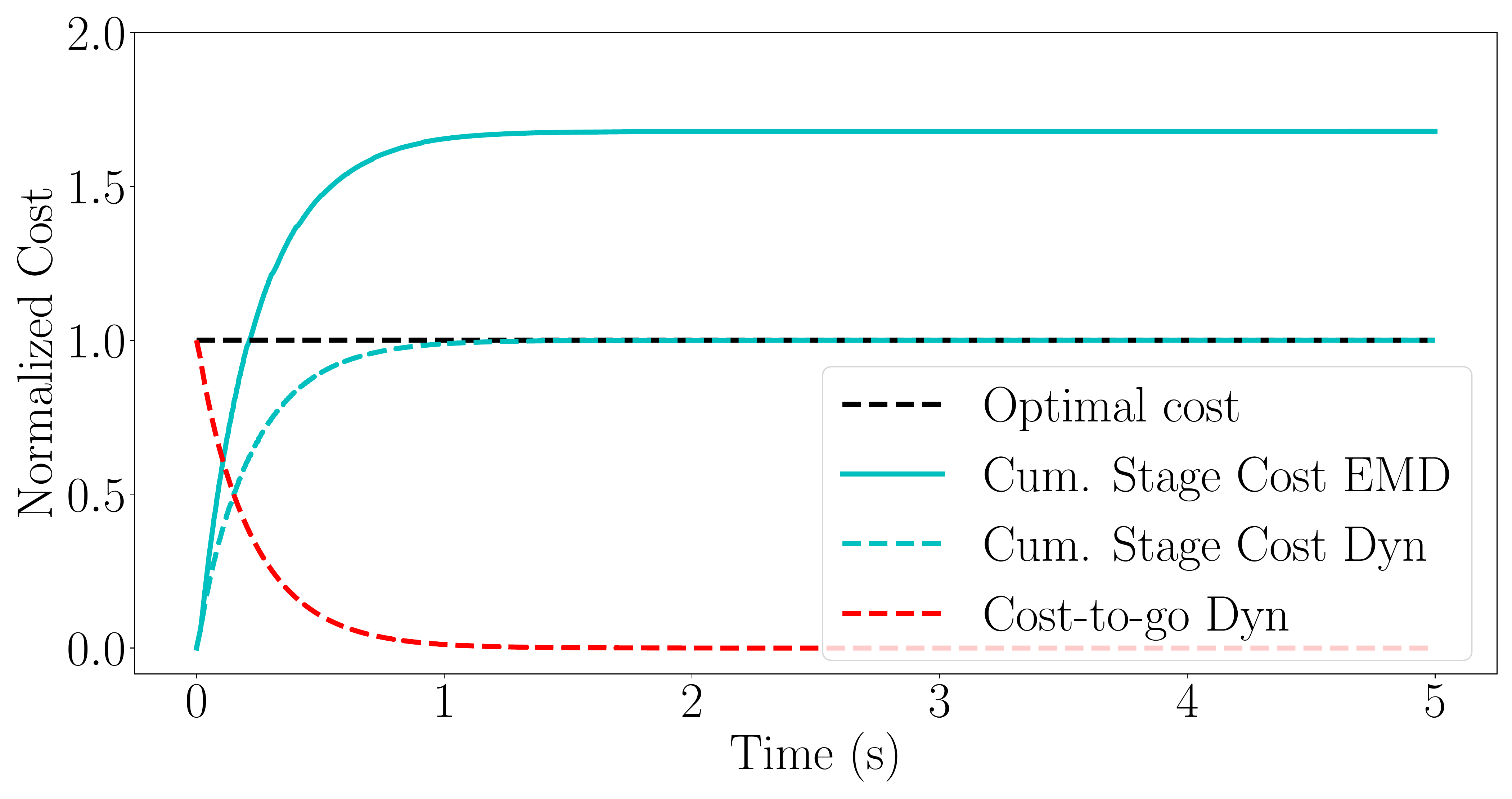}
\caption{Normalized costs incurred by 5 linearized quadcopter agents tracking 5 linearized quadcopter targets. The cumulative costs for the EMD policy exceeds the optimal cumulative costs of the dynamics-based policy for a system operating realistic dynamics. The dynamics-based policy continues to settle at the optimal value.}
\label{fig:cumcost_quadcopter}
\end{figure}

\begin{figure}[ht!]
\centering
\includegraphics[width=0.5\textwidth]{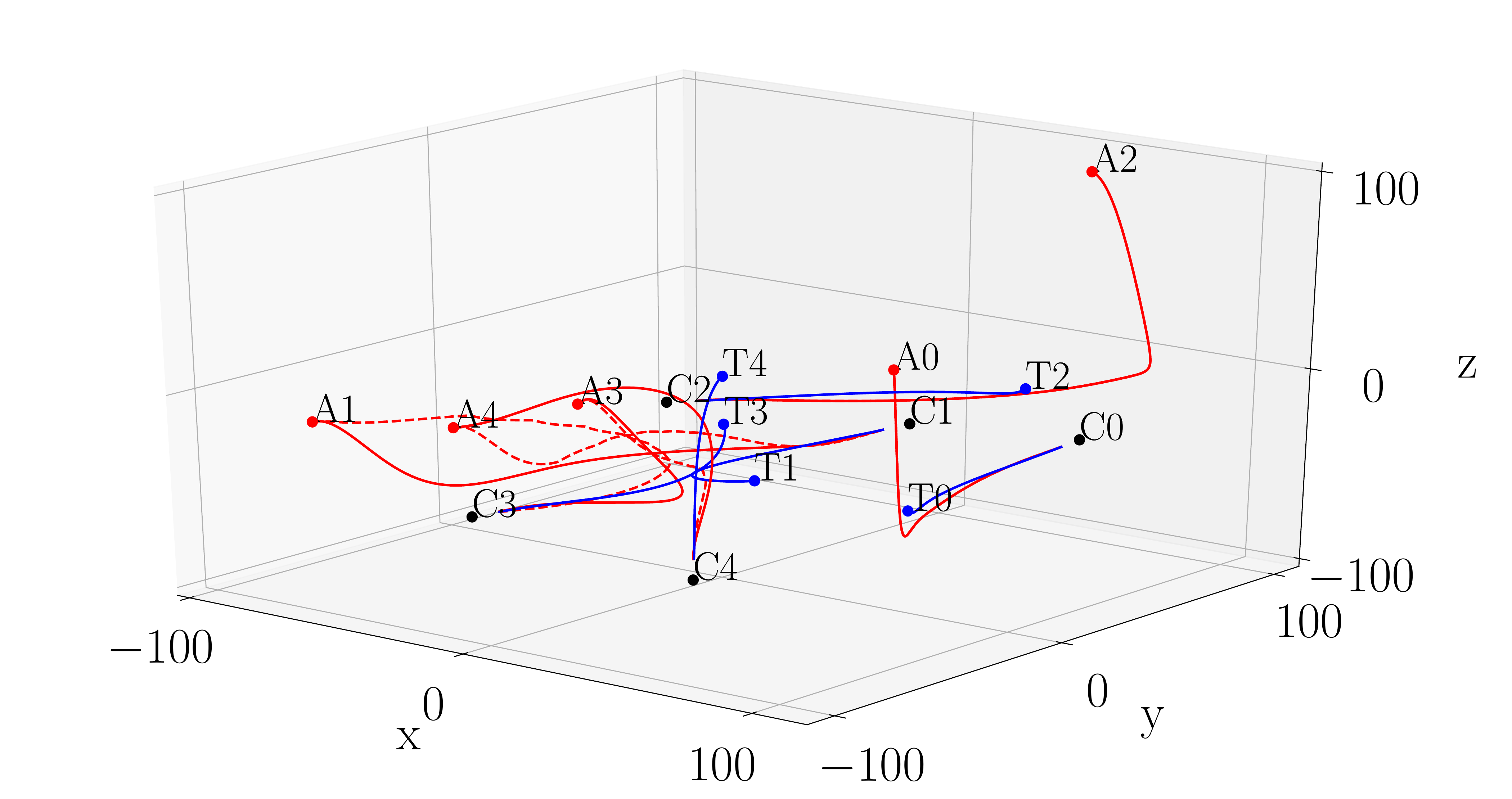}
\caption{Trajectories for a 5 vs 5 system operating linearized quadcopter dynamics. The dynamics-based policy accounts for the full dynamic capability of the agents in its assignment, rather than solely relying on spatial proximity information. This includes leveraging the rotational and translational information of the vehicle in the decision process.}
\label{fig:trajectory_quadcopter}
\end{figure}

\begin{figure}[h!]
\centering
\includegraphics[width=0.5\textwidth]{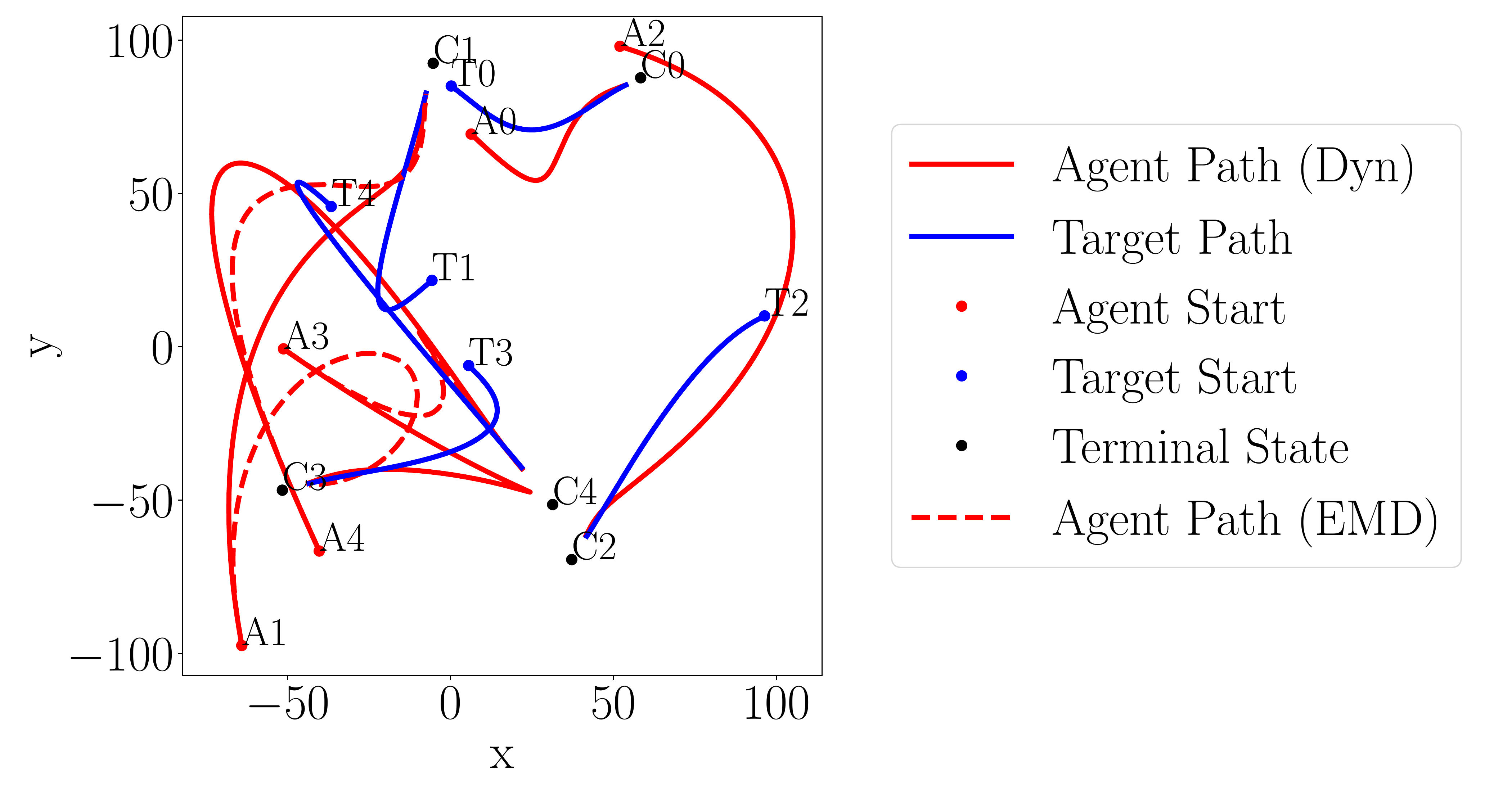}
\caption{Projection of the trajectories of Figure~\ref{fig:trajectory_quadcopter} onto the X-Y plane. Agents 1 and 4 are able to use their initial dynamic conditions to optimally track their targets instead of performing expensive turning maneuvers.}
\label{fig:2dtrajectory_quadcopter}
\end{figure} 

Since the linearized quadcopter operates over a $d=12$ statespace, the computational cost for performing assignments are more expensive, and since the EMD-based policy requires checking and updating assignments every time increment, it requires significantly greater computational expense. For this problem, the total cost of all reassignments required 0.6 seconds by the EMD policy, a signification porition of the total simulation time of five seconds.

\section{Conclusion}
In this paper we have demonstrated how to reformulate a dynamic multi-vehicle assignment problem into a linear program by linking this problem with the theory of optimal transport. This theory allows us to prove optimality and to increase the system efficiency using our approach. In the end, we have developed an assignment approach that is capability-aware. The assignment accounts for the capabilities of all the agents and targets in the system.

One direction of future research is the incorporation of constraints amongst the various agents to avoid collisions or other interactions. An extension of DOT theory in this direction could greatly increase the tractability of numerous multi-agent swarm operations, for example large scale formation flight. Another direction for future research is the incorporation of stochastic dynamics and partial state information. For either case, the approach described in this paper can be used as the basis of a greedy or approximate dynamic programming approach that is traditionally used for these problems. Finally, we can incorporate learning into the program where the agents periodically update their knowledge about the intent of the targets.

\section{ACKNOWLEDGMENTS}
We would like to thank Tom Bucklaew and Dustin Martin of Draper Laboratory for their helpful guidance and vision in support of this project. This research has been supported by Draper Laboratory, 555 Technology Square, Cambridge, MA 02139.


\bibliographystyle{plain}
\bibliography{main}

\end{document}